\documentclass[12pt, letter]{article}

\usepackage{times}
\usepackage{bm}
\usepackage{natbib}
\usepackage{verbatim}
\usepackage{graphicx}
\usepackage{amsmath}
\usepackage{subfigure}
\usepackage{color}
\usepackage{ulem}
\usepackage{amsthm} 
\usepackage{rotating}

\def\T{{ \mathrm{\scriptscriptstyle T} }}
\def\d{ {\text{d}} }

\setlength{\topmargin}{0cm} \setlength{\textwidth}{6.7in}
\setlength{\textheight}{20cm} \setlength{\oddsidemargin}{0cm}

\newtheorem{theorem}{Theorem}

\newtheorem{assumption}{Assumption}
\newtheorem{lemma}{Lemma}

\title{Nonparametric Bayes modeling of count processes}
\author{Antonio Canale\thanks{Department of Economics and Statistics, University  of Turin and Collegio Carlo Alberto, Italy  \mbox{(\emph{antonio.canale@unito.it}}}  $\,$ \& David B. Dunson\thanks{
			Dept. Statistical Science, Duke University, Durham, NC 27708, USA  \mbox{(\emph{dunson@stat.duke.edu})}}}
\date{}

\begin{document}

\maketitle

\abstract{
Data on count processes arise in a variety of applications, including longitudinal, spatial and imaging studies measuring count responses.  The literature on statistical models for dependent count data is dominated by models built from hierarchical Poisson components.  The Poisson assumption is not warranted in many applications, and hierarchical Poisson models make restrictive assumptions about over-dispersion in marginal distributions.  This article proposes a class of nonparametric Bayes count  process models, which are constructed through rounding real-valued underlying processes.  The proposed class of models accommodates applications in which one observes separate count-valued functional data for each subject under study.  Theoretical results on large support and posterior consistency are established, and computational algorithms are developed using Markov chain Monte Carlo.  The methods are evaluated via simulation studies and illustrated through application to longitudinal tumor counts and asthma inhaler usage.}

{\center \textbf{Keywords: }}
Count functional data; Generalized linear mixed model; Hierarchical model; Longitudinal data; Poisson; Splines; Stochastic process.

\section{Introduction}

A stochastic process $y = \{ y(s), s \in \mathcal{S}\}$ is a collection of random variables indexed by $s \in \mathcal{S}$, with the domain $\mathcal{S}$ commonly corresponding to a set of times or spatial locations and $y(s)$ to a random variable observed at a specific time or location $s$.  There is a rich frequentist and Bayesian literature on stochastic processes, with common choices including Gaussian processes and L\'evy processes, such as the Poisson, Wiener, beta or gamma process.  Gaussian processes provide a convenient and well studied choice when $y: \mathcal{S} \to \Re$ is a continuous function.  In the Bayesian literature, there have been substantial computational and theoretical advances for Gaussian process models in recent years.  For example, \citet{bane:etal:2008} and \citet{murr:adam:2010} develop improved methods for posterior computation, while \citet{ghos:roy:2006} and  \citet{vdva:vzel:2009} study asymptotic properties including posterior consistency and rates of convergence.  The Gaussian process is appealing in providing a prior that can be specified to generate functions that are within an arbitrarily small neighborhood of any continuous function with positive probability \citep{ghos:roy:2006}, while also being computationally convenient. 

Our interest focuses on the case in which $y: \mathcal{S} \to \mathcal{N} = \{0,\ldots,\infty\}$, so that $y$ is a count-valued stochastic process over the domain $\mathcal{S}$.  There are many applications of such processes including epidemiology studies monitoring a count biomarker or health response over time for patients and ecology studies recording the number of birds of a given species observed at different locations.  Although there is a rich literature on count stochastic process models for longitudinal and spatial data, most models rely on Poisson hierarchical specifications.  For example, \citet{frhw:wagn:2006} consider $y(s) \sim \mbox{Poisson}\{ \lambda(s) \}$ with the Poisson mean $\lambda(s)$ varying over time according to a latent process.  \citet{rue:mart:chop:2009} recently developed an integrated nested Laplace approximation to the posterior for a broad class of latent Gaussian structured additive regression models. The observed variables are assumed to belong to an exponential family (Poisson for counts), with the means given an additive model having Gaussian and Gaussian process priors on the unknown components.  

Although such models have a flexible mean structure, the Poisson assumption is restrictive in limiting the variance to be equal to the mean, with over-dispersion introduced in marginalizing out the latent processes.  This leads to a pitfall in which the dependence structure is confounded with the degree of overdispersion in the marginals in that both are induced through the latent process. Such modeling frameworks cannot accommodate correlated count data that are under-dispersed, and substantial bias can potentially result for non-Poisson over-dispersed data. 
Relying on a hierarchical Faddy model  \citep{fadd:1997}, \citet{grun:bruc:2011} developed methods that generalize the Poisson to accommodate under- and over-dispersed longitudinal counts.  The Faddy distribution allows the current rate of occurrence to depend on the number of events in a previous interval, and when a dispersion parameter is less than zero the rate decreases with each new event causing under-dispersion.  This is a restrictive type of negative feedback dependence and computation is challenging, taking several days to implement a single analysis.

In considering models that separate the marginal distribution from the dependence structure, it is natural to focus on copulas.  \citet{niko:karl:2010} proposed a copula model for bivariate counts that incorporates covariates into the marginal model. \citet{erha:czad:2009} proposed a copula model for high-dimensional counts, which can potentially allow under-dispersion in the marginals via a Faddy or Conway-Maxwell-Poisson \citep{shmu:etal:2005} model. \citet{gene:nesl:2007} provide a review of copula models for counts.  To our knowledge, copula models have not yet been developed that are directly applicable to count stochastic processes.  \citet{wils:ghah:2010} proposed a Gaussian copula process model to characterize dependence between arbitrarily many random variables independently of their marginals.  \citet{rodr:etal:2010} proposed a latent stick-breaking process, which is a nonparametric Bayes approach for a stochastic process with an unknown common marginal distribution modeled via a stick-breaking prior.  They considered a spatial count process application, with marginal modeled via a mixture of Poissons and the spatial dependence characterized through a latent Gaussian process.  This separates the marginal and dependence structure, but the marginal model is restrictive in being characterized as a mixture of Poissons, computation is intensive, and count functional data are not accommodated.

An alternative approach relies on rounding of a stochastic process. For classification it is common to threshold Gaussian process regression \citep{chu:ghah:2005, ghos:roy:2006}.  \citet{kach:yao:2009} rounded a real discrete autoregressive process to induce an integer-valued time series. \citet{cana:duns:2011} used rounding of continuous kernel mixture models to induce nonparametric models for count distributions.  This article instead proposes a class of stochastic processes that map a real-valued stochastic process $y^*: \mathcal{S} \to \Re$ to a count stochastic process $y: \mathcal{S} \to \mathcal{N}$.

	\section{Rounded Stochastic Processes }
\label{sec:roundedSP}

\subsection{Notation and model formulation}
\label{sec:model}

Let $y \in \mathcal{C}$ denote a count-valued stochastic process, with $\mathcal{S} \subset \Re^p$ compact and $\mathcal{C}$ the set of all $\mathcal{S} \to \mathcal{N}$ functions satisfying Assumption~\ref{ass:yregularity}.

\begin{assumption}
$y: \mathcal{S} \to \mathcal{N}$ is piecewise constant such that $\mathcal{S} = \bigcup_{l=1}^L \mathcal{S}_l(y)$, with $y(s)$ constant within the interior of each set $\mathcal{S}_l(y)$ and with unit increments at the boundaries $\mathcal{B}(y)$.  The boundary points fall within the set having the higher $y(s)$ value
\label{ass:yregularity}
\end{assumption} 

Assumption~\ref{ass:yregularity}  ensures that for sufficiently small changes in the input the corresponding change in the output is small.  We are particularly motivated by applications in which counts do not change erratically at nearby times but maintain some degree of similarity.  However, Assumption~\ref{ass:yregularity} does not rule out rapidly changing count processes, as one can have arbitrarily many jumps in a tiny interval and still satisfy the assumption. In addition, Assumption \ref{ass:yregularity} is easily relaxed.

We choose a prior $y \sim \Pi$, where $\Pi$ is a probability measure over $(\mathcal{C},\mathcal{B})$, with $\mathcal{B}(\mathcal{C})$ the Borel $\sigma$-algebra of subsets of $\mathcal{C}$.  The measure $\Pi$ induces the marginal probability mass functions
\begin{eqnarray}
	\mbox{pr}\{ y(s) = j \} = \Pi\{ y: y(s) = j \} = \pi_j(s),\quad j \in \mathcal{N},\quad s \in \mathcal{S}, 
\label{eq:pmf1}
\end{eqnarray}
and the joint probability mass functions
\begin{eqnarray}
	\mbox{pr}\{ y(s_1)=j_1, ... , y(s_k)=j_k \} = \Pi\{ y: y(s_1)=j_1, ... , y(s_k)=j_k\}
	= \pi_{j_1 ... j_k}(s_1, ... , s_k), 
\label{eq:pmf2}
\end{eqnarray}
for $j_h \in \mathcal{N}$ and $s_h \in \mathcal{S}$, $h=1,\ldots, k$, and any $k \geq 1$.

In introducing the Dirichlet process, \citet{art:ferg:1973} mentioned three appealing characteristics for nonparametric Bayes priors including large support, interpretability and ease of computation.  Our goal is to specify a prior $\Pi$ that gets as close to this ideal as possible.  Starting with large support, we would like to choose a $\Pi$ that allocates positive probability to arbitrarily small neighborhoods around any $y_0 \in \mathcal{C}$ with respect to an appropriate distance metric, such as $L^1$.  
To our knowledge, there is no previously defined stochastic process that satisfies this large support condition.  In the absence of prior knowledge that allows one to assume $y$ belongs to a pre-specified subset of $\mathcal{C}$ with probability one, priors must satisfy the large support property to be coherently Bayesian.  Large support is also a necessary condition for the posterior for $y$ to concentrate in small neighborhoods of any true $y_0 \in \mathcal{C}$. 

With this in mind, we propose to induce a prior $y \sim \Pi$ through
\begin{eqnarray}
	y = h(y^*),\quad y^* \sim \Pi^*, 
\label{eq:mapping}
\end{eqnarray}
where $y^*: \mathcal{S} \to \Re$ is a real-valued stochastic process, 
$h$ is a thresholding operator from $\mathcal{Y} \to \mathcal{C}$, $\mathcal{Y}$ is the set of all $\mathcal{S} \to \Re$ continuous functions, and $\Pi^*$ is a probability measure over  $(\mathcal{Y},\mathcal{B})$ with $\mathcal{B}( \mathcal{Y} )$ Borel sets.  
Unlike count-valued stochastic processes, there is a rich literature on real-valued stochastic processes.  For example, $\Pi^*$ could be chosen to correspond to a Gaussian process or could be induced through various basis or kernel expansions of $y^*$.

There are various ways in which the thresholding operator $h$ can be defined.  For interpretability and simplicity, it is appealing to maintain similarity between $y^*$ and $y$ in applying $h$, while restricting $y \in \mathcal{C}$.  Hence we focus on a rounding operator that let $y(s) = 0$ if $y^*(s) < 0$ and $y(s) = j$ if $j-1 \le y^*(s) < j$ for $j=1,\ldots,\infty$.  Negative values will be mapped to zero, which is the closest non-negative integer, while positive values will be rounded up to the nearest integer.   This type of restricted rounding ensures $y(s)$ is a non-negative integer.  Using a fixed rounding function $h$ in (\ref{eq:mapping}), we rely on flexibility of the prior $y^* \sim \Pi^*$ to induce a flexible prior $y \sim \Pi$.  For notational convenience and generality, we let
$y(s) = j$ if $y^*(s) \in A_j = [ a_j, a_{j+1} )$, with $a_0 < \cdots < a_{\infty}$ and we focus on $a_0 = -\infty, a_j = j-1, j=1,\ldots, \infty$.

This construction is particularly suitable for modeling dynamics of count processes close to zero and in particular, zero-inflated processes with local dependence in the zeros. Applying the mapping $h$ to a latent $y^*$  that assumes negative values across certain sub-regions of  $\mathcal S$ will lead to blocks of zeros in the count process $y$. This incorporates dependence between zero occurrences and the occurrence of small counts, which seems natural in most applications such as in the longitudinal tumor count study of \S \ref{sec:mouse}.

Figure~\ref{fig:examples1} illustrates the prior through showing realizations of the underlying stochastic process (Panel (a)) and resulting count process after applying the rounding operator (Panel (b)). The thick lines represents the mean functions of the real valued process and of the induced process. The latter is
\[
	E\{y(s)\} = \sum_{j=0}^{\infty} j \{F_{s}(a_{j+1}) -  F_{s}(a_j) \},
\]
where $F_s(x) = \int_{-\infty}^x f_s(y^*) \d y^*$ and $f_s$ is the marginal distribution of $y^*(s)$.

\begin{figure}
\centering
\subfigure{\includegraphics[scale=.9]{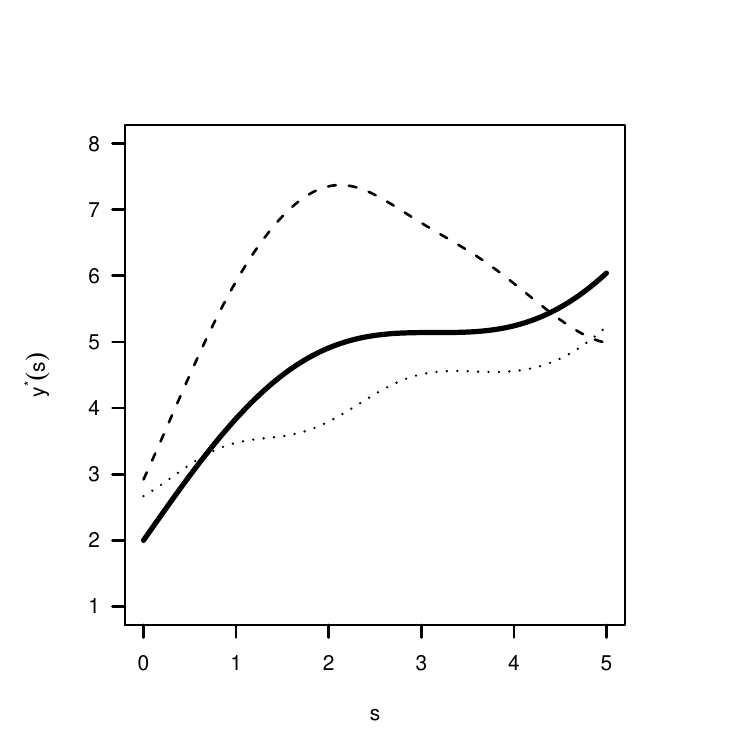}}
\subfigure{\includegraphics[scale=.9]{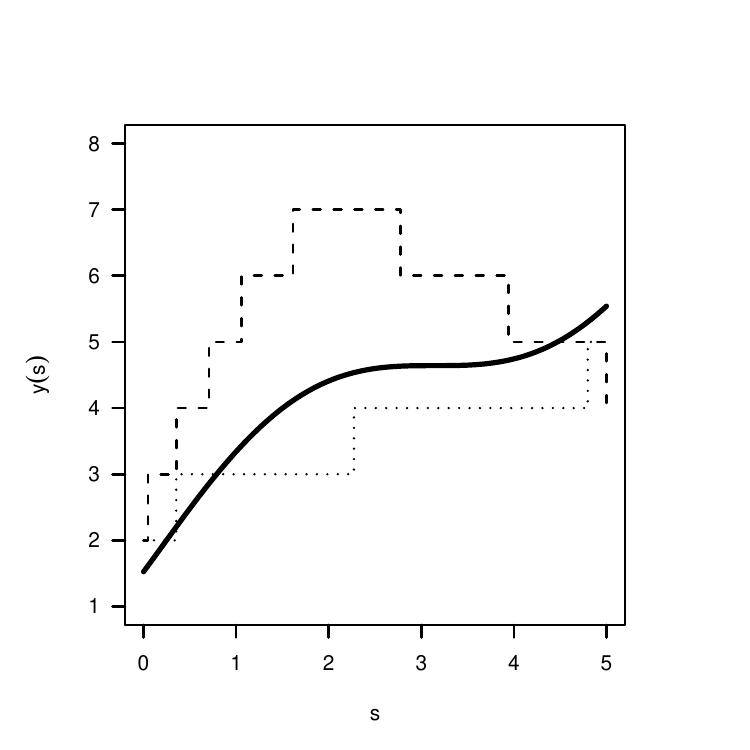}}
\caption{Panel (a) represents samples from a Gaussian process with mean function $\mu(s) = 2 + \sin(s) + s$ (bold line) and squared exponential covariance function. Panel (b) shows how the mapping operator (\ref{eq:mapping}) works. Dotted and dashed lines are the rounded version of panel (a) realizations while the bold line is the induced mean function.}
\label{fig:examples1}
\end{figure}

The covariance structure of the induced count process inherits much of the structure of the underlying process, as is clear from
\[
\mbox{cov}\{y(s),y(s')\} =  \sum_{j=0}^\infty \sum_{k=0}^\infty j\,k\, \mbox{pr}[ \{ y^*(s)\in A_{j},  y^*(s') \in A_k\} ] - E\{y(s)\}E\{y(s')\},
\]
where $\{y^*(s), y^*(s)\}$ has a bivariate distribution with covariance equal to cov$\{y^*(s), y^*(s)\}$.
We report some plots comparing the covariance of the original process with that of the induced process in the supplemental appendix.

In certain applications, count data can be naturally viewed as arising through integer-valued rounding of an underlying continuous process.  For example, in the longitudinal tumor count studies of \S \ref{sec:mouse}, it tends to be difficult to distinguish individual tumors and it is natural to posit a continuous time-varying tumor burden, with tumors fusing together and falling off over time.  In collecting the data, tumor biologists attempt to make an accurate count but measurement errors are unavoidable.  It is natural to accommodate this with a smoothly-varying continuous tumor burden specific to each animal with measurement errors and rounding producing the observed tumor counts.  However, even when there is no clear applied context motivating the existence of an underlying continuous process, the proposed formulation nonetheless leads to a highly flexible and computationally convenient model.


\subsection{Properties}
\label{subsec:properties}
The mapping function $h(\cdot)$ in (\ref{eq:mapping}) is many-to-one and the inverse mapping $h^{-1}(y)$  will correspond to an uncountable set of infinitely many continuous stochastic processes $y^*$ such that $y = h(y^*)$. As an important step in characterizing the support of the induced prior $y \sim \Pi$, Lemma~\ref{lem:existence} ensures the existence of at least one continuous stochastic process for each count process. All the proofs are reported in the Appendix.


\begin{lemma}
\label{lem:existence}
For every count stochastic process $y_0 \in \mathcal{C}$ satisfying Asssumption 1, there exists at least one continuous $y^*: \mathcal{S} \to \Re$ such that $y_0 = h(y^*)$.
\end{lemma}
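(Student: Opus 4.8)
The plan is to build the required $y^*$ explicitly from the superlevel sets of $y_0$. Write $U_k = \{s \in \mathcal{S} : y_0(s) \ge k\}$ for $k \ge 1$. Because $y_0$ is piecewise constant with finitely many pieces and, by Assumption~\ref{ass:yregularity}, boundary points take the larger adjacent value, each $U_k$ is closed, the sets are nested $\mathcal{S} \supseteq U_1 \supseteq U_2 \supseteq \cdots$, only finitely many are nonempty (say $U_k = \emptyset$ for $k > K$), and $y_0(s) = \sum_{k \ge 1} \mathbf{1}_{U_k}(s)$. The idea is to turn each unit jump of $y_0$ into a continuous ramp of height one, so that summing the ramps reproduces $y_0$ after rounding.

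First I would record the one structural fact that makes everything work: $\overline{U_k^c} \cap U_{k+1} = \emptyset$ for every $k$, where $U_k^c = \{s : y_0(s) \le k-1\}$. Indeed, a point of $U_{k+1}$ has count at least $k+1$, and by the unit-increment part of Assumption~\ref{ass:yregularity} one cannot pass from a region of count $\ge k+1$ to one of count $\le k-1$ without crossing a region of count exactly $k$, so the exterior of $U_{k+1}$ is shielded by a layer of count $k$; hence no point of count $\ge k+1$ is a limit of points of count $\le k-1$. This separation of two closed sets is the crux of the argument.

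Given this, for each $k \le K$ define the continuous function
\[
g_k(s) = \frac{\mathrm{dist}(s, \overline{U_k^c})}{\mathrm{dist}(s, \overline{U_k^c}) + \mathrm{dist}(s, U_{k+1})},
\]
which is well defined and continuous on $\mathcal{S}$ because the denominator never vanishes (the two sets are disjoint and closed). By construction $g_k \equiv 0$ on $\overline{U_k^c}$, $g_k \equiv 1$ on $U_{k+1}$, and $0 < g_k < 1$ off $U_{k+1}$. I would then set
\[
y^*(s) = \sum_{k=1}^{K} g_k(s) - \mathrm{dist}(s, U_1),
\]
a finite sum of continuous functions, hence continuous; the last term (zero on $U_1$, positive on the count-zero region) is included only to push the count-zero region strictly below $0$, since $A_0 = (-\infty, 0)$ is open at $0$.

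Finally I would verify $h(y^*) = y_0$ pointwise. On the interior of a region with $y_0 = j \ge 1$ one has $s \in U_j \setminus U_{j+1}$, so $g_1 = \cdots = g_{j-1} = 1$, $g_{j+1} = \cdots = 0$, and $g_j \in (0,1)$, giving $y^*(s) = (j-1) + g_j(s) \in [j-1, j) = A_j$; on the interior of the count-zero region $y^*(s) = -\mathrm{dist}(s, U_1) < 0 = \sup A_0$. At a boundary point $s_0$, Assumption~\ref{ass:yregularity} assigns it the larger value, say $y_0(s_0) = j+1$; there $s_0 \in U_{j+1} \cap \overline{U_{j+1}^c}$, so $g_1 = \cdots = g_j = 1$ while $g_{j+1}(s_0) = 0$ and higher terms vanish, giving $y^*(s_0) = j \in [j, j+1) = A_{j+1}$, as required. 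The main obstacle is the separation claim of the second paragraph, together with checking that the construction glues continuously across boundaries; the degenerate cases $y_0 \equiv 0$ (take $y^* \equiv -1$) and $U_1 = \mathcal{S}$ are immediate.
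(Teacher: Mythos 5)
Your proof is correct, but it takes a genuinely different, more constructive route than the paper's. The paper's own proof is essentially an assertion: it partitions $\mathcal{S}$ into the pieces $\mathcal{S}_l(y_0)$ and states that there are ``clearly'' infinitely many continuous $y^*$ satisfying (i) $y^*(s) \in [a_{y_0(s)}, a_{y_0(s)+1})$ for all $s$ and (ii) $y^*(s) = a_{y_0(s)}$ on $\mathcal{B}(y_0)$, without exhibiting one. You instead build a single $y^*$ explicitly: decomposing $y_0 = \sum_k \mathbf{1}_{U_k}$ over its superlevel sets, proving the separation property $\overline{U_k^c} \cap U_{k+1} = \emptyset$, and summing the resulting Urysohn functions $g_k$ (with the $-\mathrm{dist}(s,U_1)$ term to push the count-zero region strictly below $0$); note that your function in fact satisfies the paper's constraints (i)--(ii), since at a boundary point with $y_0(s_0)=j+1$ you get $y^*(s_0)=j=a_{j+1}$. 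What your approach buys is that it makes visible exactly where Assumption~1 does the work: the separation property is precisely what fails if, say, a count-$0$ region and a count-$2$ region touch at a single point, in which case no continuous preimage exists at all, so the paper's ``clearly'' is not innocuous and your argument closes that gap; what the paper's version buys is brevity and the immediate remark that there are infinitely many such $y^*$ (which also follows from yours by perturbing within the open bands). Two small technicalities in your writeup: for the top level $k=K$ the set $U_{K+1}$ is empty, so $g_K$ needs the convention $\mathrm{dist}(s,\emptyset)=\infty$ (giving $g_K\equiv 0$, which still places $y^*$ in $A_K=[K-1,K)$, though your claim that $g_K\in(0,1)$ on the interior should be adjusted accordingly); and both the closedness of the $U_k$ and the separation claim require reading ``unit increments at the boundaries'' as constraining \emph{every} piece adjacent to a boundary point, which is the same reading the paper's own proof implicitly requires. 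Also, the continuity of $y^*$ across boundaries needs no separate gluing check, since each $g_k$ is globally continuous and the sum is finite.
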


Defining an $L^1$ neighborhood around $y_0$ of size $\epsilon$ as
\begin{equation}
	\eta_{\epsilon}(y_0) = \left\{ y: d_1(y_0,y) = \int |y_0(s)-y(s)| \d s < \epsilon  \right\},
\end{equation}
we state the following Theorem on the prior support.

\begin{theorem}
\label{theo:l1mapping}
Assuming the prior $\Pi^*$ assigns positive probability to $L^1$ neighborhoods of any continuous function $y_0^*: \mathcal{S} \to \Re$, the prior $\Pi$ induced through (\ref{eq:mapping}) assigns positive probability to $L^1$ neighborhoods of any $y_0 \in \mathcal{C}$ satisfying Assumption 1.
\label{theo:positivity}
\end{theorem}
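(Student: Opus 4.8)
The plan is to exhibit, for each $y_0 \in \mathcal{C}$ and each $\epsilon > 0$, an $L^1$ ball around a suitably chosen continuous $y_0^*$ whose image under $h$ lies entirely inside $\eta_\epsilon(y_0)$. Since $\Pi = \Pi^* \circ h^{-1}$ and $\Pi^*$ charges every $L^1$ ball of every continuous function by hypothesis, this immediately yields $\Pi\{\eta_\epsilon(y_0)\} \ge \Pi^*\{ y^* : d_1(y^*, y_0^*) < \delta \} > 0$. By Lemma~\ref{lem:existence} there is at least one continuous $y_0^*$ with $h(y_0^*) = y_0$, and among these I would fix one that does not linger on the rounding boundaries, i.e. with $\mu\{ s : y_0^*(s) \in \{a_1, a_2, \dots\} \} = 0$, where $\mu$ denotes Lebesgue measure on $\mathcal{S}$. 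Such a choice is available because, under Assumption~\ref{ass:yregularity}, $y_0$ is piecewise constant with a measure-zero boundary $\mathcal{B}(y_0)$; one can take $y_0^*$ to sit in the interior of the appropriate bin $A_{y_0(s)}$ on each piece (e.g. near its midpoint) and to cross an integer only on $\mathcal{B}(y_0)$.

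The key deterministic ingredient is a Lipschitz-type bound for the rounding map. For any reals $x, x'$ one has $|h(x) - h(x')| \le |x - x'| + \mathbf{1}\{ x, x' \text{ lie in different bins } A_j\}$: if they lie in the same bin the left side is $0$, and if $x \in A_j$, $x' \in A_{j'}$ with $j < j'$ then $h(x) - h(x') = j - j'$ while $x' - x > (j'-1) - j$, so $|h(x) - h(x')| = j' - j < |x - x'| + 1$. Applying this pointwise to $y^*$ and $y_0^*$ and integrating over $\mathcal{S}$ gives
\[
d_1\{ h(y^*), y_0 \} \;\le\; d_1(y^*, y_0^*) + \mu(D), \qquad D := \{ s : h(y^*(s)) \ne h(y_0^*(s)) \}.
\]
It then remains to make both terms small when $d_1(y^*, y_0^*)$ is small, the first being immediate and the second being the real work.

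To control $\mu(D)$ I would split $\mathcal{S}$ according to how far $y_0^*$ sits from the bin boundaries. Let $G_\gamma = \{ s : \operatorname{dist}(y_0^*(s), \{a_1, a_2, \dots\}) \ge \gamma \}$. On $G_\gamma$, moving $y^*(s)$ into a bin different from that of $y_0^*(s)$ requires $|y^*(s) - y_0^*(s)| \ge \gamma$, so by Markov's inequality $\mu(D \cap G_\gamma) \le \mu\{ |y^* - y_0^*| \ge \gamma \} \le d_1(y^*, y_0^*)/\gamma$. On the complement $G_\gamma^c$ the map $h$ may disagree freely, but $G_\gamma^c$ decreases as $\gamma \downarrow 0$ to the measure-zero set $\{ s : y_0^*(s) \in \{a_1, a_2, \dots\} \}$, so by continuity of $\mu$ from above (finite because $\mathcal{S}$ is compact) $\mu(G_\gamma^c) \to 0$. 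Hence, given $\epsilon$, first choose $\gamma$ with $\mu(G_\gamma^c) < \epsilon/2$, then $\delta$ with $\delta(1 + 1/\gamma) < \epsilon/2$; for every $y^*$ with $d_1(y^*, y_0^*) < \delta$ the displayed bound gives $d_1\{ h(y^*), y_0 \} < \epsilon$, i.e. $h(y^*) \in \eta_\epsilon(y_0)$, completing the argument.

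The main obstacle is precisely the discontinuity of $h$: one cannot transport $L^1$ closeness through $h$ by an ordinary continuity argument, and the crux is the estimate on $\mu(D)$. The Markov step disposes of large-but-rare excursions of $y^*$ away from $y_0^*$, while the vanishing of $\mu(G_\gamma^c)$ — which is exactly where Assumption~\ref{ass:yregularity} and the careful choice of a $y_0^*$ that does not rest on an integer over a positive-measure set enter — handles the thin boundary region. A secondary point to verify is measurability, namely that $h$ is $\big(\mathcal{B}(\mathcal{Y}), \mathcal{B}(\mathcal{C})\big)$-measurable so that $\Pi = \Pi^* \circ h^{-1}$ is well defined and the ball inclusion yields the stated probability bound.
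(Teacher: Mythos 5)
Your proof is correct, and at the top level it shares the paper's skeleton: invoke Lemma~\ref{lem:existence} to get a continuous $y_0^*$ with $h(y_0^*)=y_0$, show that $h$ maps a sufficiently small $L^1$ ball around $y_0^*$ into $\eta_\epsilon(y_0)$, and conclude $\Pi\{\eta_\epsilon(y_0)\} \ge \Pi^*\{\eta_\delta(y_0^*)\}>0$. But the way you establish the neighborhood-preservation step is genuinely different from, and cleaner than, the paper's. The paper proves an intermediate result (its Lemma~\ref{lem:neigh}) by partitioning $\mathcal{S}$ into agreement/disagreement pieces, defining $\zeta(\epsilon^*;y_0^*)$ as a supremum over the ball of $\bigl(\max_l \delta_l\bigr)\sum_{l:\delta_l\neq 0}\lambda(\mathcal{S}_l)$, and arguing $\zeta \to 0$ by claiming that $L^1$ closeness of continuous bounded functions forces sup-norm closeness; that claim is the weak point of the published argument, since $L^1$-small perturbations can contain tall thin spikes with sup-norm bounded away from zero (and, worse, a ball element combining a thin tall spike with a broad low discrepancy makes the paper's product $\zeta$ fail to vanish). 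Your route sidesteps this entirely: the pointwise bound $|h(x)-h(x')| \le |x-x'| + \mathbf{1}\{x,x' \text{ in different bins}\}$ yields $d_1\{h(y^*),y_0\} \le d_1(y^*,y_0^*)+\mu(D)$, Markov's inequality controls $\mu(D)$ on the region where $y_0^*$ keeps distance $\gamma$ from the thresholds (so spikes are harmless because they are rare in measure), and continuity from above of Lebesgue measure kills the thin boundary layer; no sup-norm control and no boundedness hypothesis are needed. The one point you should make explicit is that your choice of $y_0^*$ meeting the integers only on a null set requires $\mathcal{B}(y_0)$ (more precisely, the set where any continuous preimage must touch a threshold) to have Lebesgue measure zero; this is the intended reading of Assumption~\ref{ass:yregularity} and is implicitly used by the paper as well, but it does not follow from the assumption's literal wording (a piece $\mathcal{S}_l$ with empty interior and positive-measure boundary is not excluded). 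The measurability of $h$, which you flag, is likewise glossed over in the paper, and is routine for the cylinder/Borel structure here.
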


In addition to showing large support of the prior, it is important to verify that the posterior distribution for $y$ concentrates increasingly around the true process $y_0$ as the sample size increases.  Theorem~\ref{theo:consistency} provides sufficient conditions under which $L^1$ posterior consistency is obtained. 
Assumption \ref{ass:infill} provides a space-filling regularity condition on the design.

\begin{assumption}
\label{ass:infill}
Let $\mathcal{S} = [0,1]^p$ and assume the $n$ values of $s_i$ arise from an in-fill design such that we can cover $\mathcal{S}$ with $n$ $L^\infty$ balls centered around $s_1, \dots, s_n$ of size $\delta$ with $2\delta \in \left( n^{-1/p},\lfloor n^{1/p}\rfloor^{-1} \right) $.
\end{assumption}

\begin{theorem}
\label{theo:consistency}
Let $y \in \mathcal{C}$ be a count stochastic process with $y_i = y(s_i)$, for $i=1,\ldots, n$ and $(s_1,\ldots, s_n)$ following Assumption~\ref{ass:infill}.
Letting $y_0 \in \mathcal{C}$ denote the true stochastic process and \mbox{$y \sim \Pi$}, then if $\Pi\{ \eta_{\epsilon}(y_0)\}>0$ for any $\epsilon$ and there exist sets $\{ \mathcal{C}_n \}_{n=1}^\infty$ with $\mathcal{C}_n \in \mathcal{C}$ and $\mathcal{C}_n^C$ the complement of $\mathcal{C}_n$, where $\Pi\{ \mathcal{C}_n^C \}< c_1 e^{-c_2 n}$, and $c_1, c_2$ positive constants, then
\begin{equation}
	 \Pi \left\{ \eta^C_{\epsilon}(y_0)\, |\, y_1,\ldots, y_n \right\} \to 0. 
\label{eq:posterior}
\end{equation}
\end{theorem}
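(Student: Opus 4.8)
The plan is to adapt the standard Schwartz-type decomposition of the posterior to the present noiseless in-fill setting. Writing the posterior mass of the complementary neighborhood as a ratio,
\[
\Pi\{\eta_\epsilon^C(y_0)\mid y_1,\dots,y_n\} = \frac{\Pi\{\eta_\epsilon^C(y_0)\cap D_n\}}{\Pi\{D_n\}},
\]
where $D_n=\{y\in\mathcal{C}: y(s_i)=y_0(s_i),\ i=1,\dots,n\}$ is the event that a path matches the observed counts, I would bound the numerator from above by an exponentially small quantity and the denominator from below by a sub-exponential one, so the ratio vanishes. The two hypotheses supply exactly the ingredients this requires: prior positivity on $L^1$ balls controls the denominator, while the exponentially negligible sieve complement, together with Assumption~\ref{ass:infill}, controls the numerator.

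For the numerator I would split $\eta_\epsilon^C(y_0)\cap D_n\subseteq(\eta_\epsilon^C(y_0)\cap\mathcal{C}_n\cap D_n)\cup\mathcal{C}_n^C$. The term $\Pi\{\mathcal{C}_n^C\}<c_1 e^{-c_2 n}$ is handled directly by hypothesis. The crux is to show the first term is empty for all large $n$, i.e.\ any $y\in\mathcal{C}_n$ agreeing with $y_0$ at every design point must in fact lie in $\eta_\epsilon(y_0)$; this plays the role of an exponentially consistent test, and in fact drives its type-two error to zero. This is a purely geometric statement: by Assumption~\ref{ass:infill} the $s_i$ form a $\delta$-net of $\mathcal{S}=[0,1]^p$ with $\delta\downarrow 0$, and since $y_0$ and any sieve path obey the piecewise-constant regularity of Assumption~\ref{ass:yregularity} with complexity controlled on $\mathcal{C}_n$, two such paths agreeing on a $\delta$-net can differ only on a shrinking neighborhood of their jump boundaries, of Lebesgue measure $O(\delta)\to 0$; hence their $L^1$ distance drops below $\epsilon$. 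Making the admissible sieve complexity depend on $n$ so that this boundary measure genuinely vanishes is the step to verify with care.

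For the denominator I would reuse the continuous $y_0^*$ with $h(y_0^*)=y_0$ furnished by Lemma~\ref{lem:existence}, as constructed in the proof of Theorem~\ref{theo:l1mapping}, chosen to stay a fixed distance from the rounding thresholds except on neighborhoods of the boundaries $\mathcal{B}(y_0)$. A supremum-norm tube about $y_0^*$ maps entirely into $D_n$ at all design points save those near $\mathcal{B}(y_0)$; since the boundary set has measure zero and the net has spacing $\delta$, only $O(n^{1-1/p})$ points are problematic, each costing a bounded multiplicative factor. The same mass that $\Pi^*$ assigns to this tube underlies the positivity hypothesis $\Pi\{\eta_{\epsilon'}(y_0)\}>0$ via Theorem~\ref{theo:l1mapping}, and it yields $\Pi\{D_n\}\ge e^{-o(n)}$, decaying strictly slower than the numerator.

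Combining, for large $n$ the numerator is at most $c_1 e^{-c_2 n}$ while the denominator is at least $e^{-o(n)}$, so $\Pi\{\eta_\epsilon^C(y_0)\mid y_1,\dots,y_n\}\le c_1 e^{-c_2 n+o(n)}\to 0$, which is (\ref{eq:posterior}). I expect the main obstacle to be the denominator: unlike the classical i.i.d.\ argument there is no law of large numbers forcing $n^{-1}\log\Pi\{D_n\}\to 0$, so the sub-exponential lower bound must be built by hand, and the delicate point is the design points falling near $\mathcal{B}(y_0)$, where the latent $y_0^*$ is unavoidably close to a threshold and the tube construction nearly fails. A secondary difficulty is quantifying the sieve complexity sharply enough that ``matching on a $\delta$-net forces $L^1$-closeness'' holds uniformly over $\mathcal{C}_n$.
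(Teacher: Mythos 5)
Your proposal follows the same overall architecture as the paper's proof, and where it differs the difference is instructive. The paper writes the posterior with a test function $\Phi_n$ and a sieve and verifies four exponential conditions, but its tests are indicators that for large $n$ have exactly zero type~I and type~II error, so its argument collapses to precisely your decomposition: the integral over $\eta_\epsilon^C(y_0)\cap\mathcal{C}_n\cap D_n$ (your matching set $D_n$) is annihilated --- your ``emptiness'' step --- and the integral over $\mathcal{C}_n^C$ is bounded by $c_1e^{-c_2 n}$ via Fubini, exactly as you propose. The sieve-complexity issue you flag as ``the step to verify with care'' is resolved in the paper by construction rather than by a separate estimate: $\mathcal{C}_n$ is defined in (\ref{eq:sieve}) as the set of paths constant on each of the $\lfloor n^{1/p}\rfloor^p$ covering balls of Assumption~\ref{ass:infill}, with a sup-norm bound $M_n=\mathcal{O}(n^\alpha)$, so a sieve path that agrees with the data at one design point per ball must equal $y_0(s_j')$ on every ball and can differ from $y_0$ only on the vanishing-measure union of balls meeting $\mathcal{B}(y_0)$; this is your $\delta$-net argument with the complexity control built into the sieve. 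The genuine divergence is the denominator: the paper disposes of condition (\ref{domain}) in one sentence, asserting that prior positivity makes $I_{3,n}$ positive, whereas you correctly observe that positivity at each fixed $n$ is not enough for an exponential-ratio argument --- one needs a sub-exponential lower bound (in fact, since the sets $D_n$ decrease in $n$, a uniform lower bound $\Pi\{D_\infty\}>0$ would do), and there is no law of large numbers to supply it in this noiseless setting. Your tube construction around a latent $y_0^*$, with one-sided care near the boundaries $\mathcal{B}(y_0)$, is the natural way to get such a bound under the rounded-process prior; note that the theorem's stated hypothesis $\Pi\{\eta_\epsilon(y_0)\}>0$ does not by itself deliver it, since an $L^1$-close path can disagree with $y_0$ at every design point. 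So your proposal is correct in outline, matches the paper on the numerator/test side, and is more careful than the paper on the denominator, where the paper's own proof leaves the key uniformity implicit.
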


From Theorems~\ref{theo:positivity} and \ref{theo:consistency}, it follows that the prior proposed in equation (\ref{eq:mapping}) will lead to $L^1$ posterior consistency under Assumptions~\ref{ass:yregularity}--\ref{ass:infill} as long as $\Pi^*$ assigns positive probability to $L^1$ neighborhoods of any continuous function and negligible probability  to $\mathcal{Y}_n^C = h^{-1}(\mathcal{C}_n^C)$ as $n$ increases. 
\citet{choi:sche:2007} showed that this condition holds, if $\mathcal{Y}_n^C$ has a particular form, for $\Pi^*$ corresponding to orthogonal basis expansions or Gaussian processes with continuously differentiable mean function and covariance having the form $k(s,s^{'};\beta) = k_0(\beta |s-s^{'}|)$, where $s \in \Re$, $k_0(s)$ is a positive multiple of a density function four times continuously differentiable on $\Re$ and suitable hyperprior is chosen for  $\beta$.

\subsection{Posterior computation}
\label{sec:computation}

We estimate the count process $y$ at locations $s^{(N)}$ including observed locations $s^{(n)} = (s_1,\ldots, s_n)^T$ and additional locations of interest $s_{n+1},\ldots,s_N$.  Our rounded Gaussian process lets $y^* \sim \mbox{\small{GP}}( 0, k)$, where $k(s,s') = \mbox{cov}\{ y^*(s), y^*(s') \} = \tau_1 \exp( -\tau_2 ||s-s' ||^2 )$ is the covariance with $\tau_1^{-1} \sim \mbox{Ga}(a_{\tau_1},b_{\tau_1})$ a scale parameter, $\tau_2^p \sim \mbox{Ga}(a_{\tau_2},b_{\tau_2})$ and $p$ the dimension of the domain $\mathcal{S}$.  Here, $\tau_2^{-1}$ is a bandwidth parameter controlling smoothness, and this prior is motivated by the optimality results of  \citet{vdva:vzel:2009}, though their theory does not apply directly in our setting.  The resulting joint distribution of $y^{*(n)} = \{ y^*(s_1),\ldots, y^*(s_n) \}^T$ is $N_n( 0, \Sigma_n)$, with $\Sigma_n = \{ \sigma_{ij} \}$ and $\sigma_{ij} = k(s_i,s_j)$.

Posterior computation can proceed via a Markov chain Monte Carlo algorithm.
\begin{itemize}
\item[Step]1 Sample $y^{*(n)}$ from $N_n(0, \Sigma_n)$ truncated to fall in a hyper-rectangle having
$a_{y_i} \le y^*(s_i) < a_{y_i+1}$.
\item[Step]2 Sample $\tau_1^{-1}$ from conditional posterior distribution $\mbox{Ga}(a_{\tau_1} + n/2, b_{\tau_1} + y^{*\T} \tau_1 \Sigma_n y^{*})$.
\item[Step]3 Update $\tau_2$ using a Metropolis-Hastings step.
\item[Step]4 After burn-in, sample $y^*(s_{n+1}),\ldots, y^*(s_N)$ from the multivariate Gaussian conditional distribution.
\end{itemize}
In Step 1, Gibbs sampling can be used to update each $y^*(s_i)$ from its univariate truncated Gaussian conditional, but this leads to slow mixing in our experience.  Instead, we use the slice sampler of \citet{liec:2010}, which samples multivariate Gaussians restricted to a rectangular region.  In step 3, the likelihood of $y^{(n)}$ marginalizing out $y^{*(n)}$ cannot be calculated analytically, so we rely on the multivariate normal likelihood of $y^{*(n)}$ in calculating the acceptance probability.  It is well known that updating $\tau_2$ conditionally on a latent Gaussian process can lead to stickiness, but due to the fact that our rounding approach minimizes differences between the observed $y$ and the latent $y^*$ we have not found this to be a major problem.  Alternatively, one can improve mixing using the slice sampling approach of  \citet{murr:adam:2010} with some additional complexity.

As for other Gaussian process models, we face a computational bottleneck and numerical instability as we evaluate $y^*$ at increasing numbers of locations. Particularly when the process is observed at close locations and the covariance function favors smooth realizations, one obtains
an ill-conditioned matrix, which can lead to large computational errors which degrade performance. There is a rich literature proposing solutions, with \citet{bane:etal:2012} a recent example. A widely-used approximation represents the function as a linear combination of finitely many basis functions, leading to reduced instability problems and potentially improving Markov chain Monte Carlo mixing. 
Hence, along with the rounded Gaussian process, we implement an alternative that approximates $y^*$ using penalized splines, with details on this approach provided in a supplemental appendix.

\section{Simulation study}

A simulation study is conducted to assess the performance of the proposed approach, implemented using rounded Gaussian processes or P-splines, relative to several competitors. The first set of competitors initially treats the count measurements as continuous, assuming $h$ to be the identity function.  The estimated continuous trajectory is then rounded in a second stage to produce an estimated count process.  Such ad hoc two-stage approaches are simple to implement; we consider two-stage versions of rounded Gaussian processes and P-splines.  A second approach treats the count measurements as ordered categorical using the Gaussian process ordinal regression model of \citet{chu:ghah:2005}.  This method faces complications when applied to counts and sparse ordered categorical data.  In particular, letting $y(s_i) \in \{0,1,\ldots, d\}$ for $i=1,\ldots, n$ and $n_j = \sum_{i=1}^n 1_{\{y(s_i)=j\}}$, the total number of observations having value $j$,  poor performance was obtained when any $n_j$ was small, with lack of convergence when $n_j=0$ for any $j \in \{0,1,\ldots,d\}$.  A third approach corresponds to Poisson regression with mean parameter $\lambda(s)$ estimated with a spline smoother as done by default by gam function of R library {\small MASS}.  Lastly, we consider a simple interpolating step function defined as 
\begin{equation}
f(s) = y_1 1_{s < s_{2} }(s) + \sum_{j=2}^n y_j 1_{s_j \leq s < s_{j+1} }(s).
\end{equation}
For our method, we considered the posterior median of $y(s)$.

Simulations have been run under a wide variety of settings leading to qualitatively similar results. We report the results for four scenarios. The first scenario generates count stochastic processes from Poisson$\{\lambda(s)\}$, with  
$\lambda(s)= 2+ s/5 + \sin(s)$.  In the second scenario, $y$ is generated by rounding a realization of a Gaussian process plus an error term,
\begin{eqnarray}
	y = h(y^*), \, \, \, \, \, \, y^* \sim \text{\small{GP}} (\mu, k) + \epsilon,  \label{eq:simcase2}
\end{eqnarray}
with mean function $\mu(s) = 2 + \exp (s / 5)$, covariance function $k(s,s^{'})$ squared exponential and $\epsilon(s)$ independent draws from $N(0,2)$. These two cases do not satisfy Assumption 1, since infinitely many discontinuity points can occur. Under the third scenario, we generate from a Poisson count process with rate parameter $1/2$ and in the fourth from (\ref{eq:simcase2}) with $\epsilon = 0$.

For each case, we generated data on a equispaced grid of $1,000$ points between $0$ and $20$. 
Taking equispaced subsamples for different level of sparsity, namely of sizes $n=25$, $n=50$, $n=100$, and $n = 500$, we estimate the trajectory on a fine grid for 500 replicates for each scenario and each method. 
Using Markov chain Monte Carlo, we obtained draws from the posterior predictive distribution and used the median as our estimate. Methods are compared based on averaging the mean absolute deviation between the estimate and the true process across the replicates and grid points.

\begin{sidewaystable}								
\def~{\hphantom{0}}								
\caption{Mean absolute deviation (and standard deviation) in simulation study of Section \ref{sec:computation}. RGP, rounded Gaussian process; GP, Gaussian process; RPS, rounded P-splines; PS, P-spline; GPOR, Gaussian process ordinal regression;
 NPP, nonparametric Poisson model; E, empirical interpolating step function.								
}{%
\begin{tabular}{lcccccccc}								
 \\								
& \multicolumn{4}{c}{Scenario 1} & \multicolumn{4}{c}{Scenario 2} \\								
&	$n = 25$ &	$n = 50$ &	$n = 100$ &	$n = 500$ &	$n = 25$ &	$n = 50$ &	$n = 100$ &	$n = 500$ \\
RGP & 	2$\cdot$10 (0$\cdot$17) & 	2$\cdot$04 (0$\cdot$11) & 	1$\cdot$93 (0$\cdot$09) &	1$\cdot$02 (0$\cdot$05) & 	2$\cdot$27  (0$\cdot$03) & 	2$\cdot$12 (0$\cdot$02) & 	1$\cdot$98 (0$\cdot$01) & 	0$\cdot$93 (0$\cdot$01) \\
GP &	2$\cdot$12 (0$\cdot$02) & 	2$\cdot$07 (0$\cdot$02) &	1$\cdot$98 (0$\cdot$01) &	1$\cdot$05 (0$\cdot$01) &	2$\cdot$27  (0$\cdot$03) & 	2$\cdot$13 (0$\cdot$02) & 	1$\cdot$99 (0$\cdot$01) & 	0$\cdot$95 (0$\cdot$01) \\
RPS &	1$\cdot$70 (0$\cdot$09) &	1$\cdot$62 (0$\cdot$07) &	1$\cdot$5 (0$\cdot$05) &	0$\cdot$79 (0$\cdot$03) &	1$\cdot$78 (0$\cdot$11) &	1$\cdot$65 (0$\cdot$06) &	1$\cdot$51 (0$\cdot$05) &	0$\cdot$81 (0$\cdot$03) \\
PS &	1$\cdot$70 (0$\cdot$08) &	1$\cdot$63 (0$\cdot$07) &	1$\cdot$51 (0$\cdot$05) &	0$\cdot$8 (0$\cdot$03) &	1$\cdot$81 (0$\cdot$13) &	1$\cdot$69 (0$\cdot$07) &	1$\cdot$55 (0$\cdot$06) &	0$\cdot$83 (0$\cdot$03) \\
GPOR &	2$\cdot$26 (0$\cdot$33) &	2$\cdot$22 (0$\cdot$26) &	2$\cdot$14 (0$\cdot$21) &	2$\cdot$18 (0$\cdot$14) &	2$\cdot$47 (0$\cdot$31) &	2$\cdot$46 (0$\cdot$31) &	2$\cdot$42 (0$\cdot$24) &	2$\cdot$73 (0$\cdot$14) \\
NPP &	1$\cdot$74 (0$\cdot$08) &	1$\cdot$69 (0$\cdot$06) &	1$\cdot$66 (0$\cdot$05) &	1$\cdot$64 (0$\cdot$04) &	1$\cdot$78 (0$\cdot$1) &	1$\cdot$71 (0$\cdot$06) &	1$\cdot$69 (0$\cdot$06) &	1$\cdot$66 (0$\cdot$05) \\
E &	2$\cdot$2 (0$\cdot$18) &	2$\cdot$19 (0$\cdot$13) &	2$\cdot$2 (0$\cdot$1) &	2$\cdot$2 (0$\cdot$08) &	2$\cdot$58 (0$\cdot$2) &	2$\cdot$31 (0$\cdot$13) &	2$\cdot$25 (0$\cdot$11) &	2$\cdot$21 (0$\cdot$06) \\
& \multicolumn{4}{c}{Scenario 3} & \multicolumn{4}{c}{Scenario 4} \\								
&	$n = 25$ &	$n = 50$ &	$n = 100$ &	$n = 500$ &	$n = 25$ &	$n = 50$ &	$n = 100$ &	$n = 500$ \\
RGP & 	0$\cdot$12 (0$\cdot$01) & 	0$\cdot$07 (0$\cdot$01) & 	0$\cdot$05 (0$\cdot$01) &	0$\cdot$01 (0$\cdot$01) &	0$\cdot$34 (0$\cdot$01) & 	0$\cdot$25 (0$\cdot$01) & 	0$\cdot$17 (0$\cdot$01) &	0$\cdot$05 (0$\cdot$01) \\
GP &	0$\cdot$42 (0$\cdot$01) &	0$\cdot$37( 0$\cdot$01) & 	0$\cdot$34 (0$\cdot$01) & 	0$\cdot$21 (0$\cdot$01) &	0$\cdot$58 (0$\cdot$06) &	0$\cdot$52 (0$\cdot$04) & 	0$\cdot$47 (0$\cdot$02) & 	0$\cdot$26 (0$\cdot$01) \\
RPS &	0$\cdot$14 (0$\cdot$06) &	0$\cdot$08 (0$\cdot$04) &	0$\cdot$05 (0$\cdot$03) &	0$\cdot$02 (0$\cdot$01) &	0$\cdot$28 (0$\cdot$07) &	0$\cdot$19 (0$\cdot$05) &	0$\cdot$13 (0$\cdot$03) &	0$\cdot$05 (0$\cdot$01) \\
PS &	0$\cdot$41 (0$\cdot$08) &	0$\cdot$39 (0$\cdot$06) &	0$\cdot$37 (0$\cdot$06) &	0$\cdot$21 (0$\cdot$03) &	0$\cdot$53 (0$\cdot$06) &	0$\cdot$49 (0$\cdot$04) &	0$\cdot$46 (0$\cdot$03) &	0$\cdot$26 (0$\cdot$01) \\
GPOR &	2$\cdot$88 (0$\cdot$8) &	3$\cdot$03 (1$\cdot$04) &	3$\cdot$26 (1$\cdot$4) &	3$\cdot$65 (1$\cdot$51) &	2$\cdot$25 (2$\cdot$04) &	2$\cdot$6 (3$\cdot$97) &	4$\cdot$74 (8$\cdot$52) &	5$\cdot$9 (10$\cdot$09) \\
NPP &	0$\cdot$27 (0$\cdot$09) &	0$\cdot$26 (0$\cdot$09) &	0$\cdot$26 (0$\cdot$09) &	0$\cdot$26 (0$\cdot$09) &	0$\cdot$56 (0$\cdot$12) &	0$\cdot$56 (0$\cdot$12) &	0$\cdot$56 (0$\cdot$12) &	0$\cdot$56 (0$\cdot$12) \\
E &	0$\cdot$18 (0$\cdot$07) &	0$\cdot$09 (0$\cdot$04) &	0$\cdot$05 (0$\cdot$02) &	0$\cdot$01 (0) &	1$\cdot$11 (0$\cdot$06) &	0$\cdot$59 (0$\cdot$05) &	0$\cdot$31 (0$\cdot$03) &	0$\cdot$09 (0$\cdot$01) \\
\end{tabular}}								
\label{tab:simulation1}								
\end{sidewaystable}

From Table~\ref{tab:simulation1}, it is apparent that the proposed rounding approaches have the best overall performance.  The Gaussian process ordinal regression model consistently has the worst performance.  As expected the Poisson model with nonparametric mean performs well in scenario 1 but poorly in  other cases, particularly when the sample size is not small.  The interpolating step function has consistently poor performance except in scenario 3.  The two stage methods perform similarly to the proposed approaches in scenarios 1 and 2, but have substantially worse performance in scenarios 3 and 4. The two stage methods have particularly poor performance when counts do not take a wide range of values, have values near zero, or tend to have many occurrences of the same value. In addition, the approach of rounding in a second stage can have unanticipated consequences in terms of inference on functionals, which may be unreliable and biased.
Interestingly, the rounded P-splines approach has somewhat better performance than the rounded Gaussian process.  Since rounded P-splines are also faster to implement, taking from 15 seconds for samples of size $n = 25$ to 30 seconds for samples of size $n = 500$ for 10,000 MCMC iterations in each of the simulated examples, we focus on this approach in the real data applications.
We also compared the methods in terms of predictive mean absolute deviation, width and coverage of predictive credible intervals and again observed better performance overall for the proposed approaches, with the competitors having high mean absolute deviation and poor coverage in at least one of the cases.  Additional tables summarizing the results for predictive errors and predictive coverage are reported in the supplemental appendix.

\section{Real data application}

\subsection{Count functional data}
\label{sec:countfda}

We have focused on the case in which there is a single count process $y$ observed at locations $s = (s_1,\ldots, s_n)^T$.  In many applications, there are instead multiple related count processes $\{ y_i, i=1,\ldots, n \}$, with the $i$th process observed at locations $s_i = (s_{i1},\ldots, s_{in_i})^T$.  We refer to such data as count functional data.  As in other functional data settings, it is of interest to borrow information across the individual functions through use of a hierarchical model.  This can be accomplished within our rounded stochastic processes framework by first defining a functional data model for a collection of underlying continuous functions $\{ y_i^*, i=1,\ldots, n\}$, and then letting $y_i = h(y_i^*)$, for $i=1,\ldots, n$.  There is a rich literature on appropriate models for $\{ y_i^*, i=1,\ldots, n\}$ ranging from hierarchical Gaussian processes \citep{art:behs:etal:2005} to wavelet-based functional mixed models \citep{art:morr:carr:2006}. 

Let $y_i(s)$ denote the count for subject $i$ at time $s$, $y_{it} = y_i(s_{it})$, where $s_{it}$ is the $t$th observation time for subject $i$, and $x_{it} = (x_{it1},\ldots,x_{itp})^T$ predictors for subject $i$ at the $t$th observation time.  As a simple model motivated by the longitudinal tumor count and asthma inhaler use applications described below, we let 
\begin{eqnarray}
y_{it} = h( y_{it}^* ),\quad y_{it}^* = \xi_i + b(s_{it},x_{it})^T\theta + \epsilon_{it},\quad \xi_i \sim Q,\quad \epsilon_{it} \sim N(0,\tau^{-1}), \label{eq:func1}	
\end{eqnarray}
where $\xi_i$ is a subject-specific random effect, $b(\cdot)$ are basis functions that depend on time and predictors, $\theta$ are unknown basis coefficients, and $\epsilon_{it}$ is a residual which allows the counts to vary erratically from time to time about the smooth subject-specific mean curve.
We use basis expansions motivated by the success of rounded P-splines in our simulation.  To allow the random effect distribution to be unknown, we choose a Dirichlet process prior \citep{art:ferg:1973}, with $Q \sim \mbox{\small{DP}}(\alpha Q_0)$, with $\alpha$ a precision parameter and the base measure $Q_0$ chosen as $N(0,\psi)$ with $\psi \sim \mbox{Ga}(a_\psi, b_\psi)$.  As commonly done we fix $\alpha =1$.
We additionally choose a hyperprior for the residual precision $p(\tau) \propto \tau^{-1}$ and for the basis coefficients $p(\theta)$, with the specific form of $p(\theta)$ depending on the context.

\subsection{Transgenic mouse bioassay application}
\label{sec:mouse}

We first analyze data from a Tg.AC mouse bioassay study of pentaerythritol triacrylate, a chemical used in many industrial processes. Animals are randomized to a control or one of five dose groups each of size 30. The five dose groups are 0$\cdot$75, 1$\cdot$5, 3, 6, or 12 mg/kg. The number of skin papillomas on the back of each mouse is counted weekly for 26 weeks and it is of interest to compare the groups to see if there is an increase in tumorigenicity relative to control, while assessing dose response trend.  \citet{duns:herr:2005} analyzed these data through a Poisson-gamma frailty model.  As motivated in \S 1$\cdot$2, Poisson hierarchical models are quite restrictive and our focus here is on using the proposed model to improve robustness.

The only predictor for an animal is the dose group $x_i \in \{1,\ldots,G\}$ and we let $b(s_{it},x_i)^T\theta = b(s_{it})^T\theta_{x_i}$ in expression (\ref{eq:func1}) to allow a separate trajectory in time for each dose group,
with $b(s)$ B-spline basis functions, $\theta_g$ basis coefficients specific to group $g$, and $p(\theta_g\, |\, \lambda) \propto \exp(-1/2 \lambda \theta_g^T P \theta_g)$  conditionally independent P-spline priors for each dose group. 
The prior is designed to only borrow information across dose groups in estimating smoothness parameter $\lambda$ to avoid the possibility of having chemical effects in higher dose groups pull up the estimated tumor response in lower dose groups. To induce a heavy-tailed prior having appealing computational properties, we use a multilevel hierarchical prior for $\lambda$, with 
\mbox{$\lambda \sim \text{Ga}(\nu/2,\delta\nu/2)$,} $\delta \sim \text{Ga}(a_\delta, b_\delta)$ and $\psi \sim \text{Ga}(a_\psi, b_\psi)$. 
We do not expect to have substantial learning from the data about $\delta$ or $\psi$. Computational details are reported in a supplemental appendix.

As a global measure of toxicity, we use the average papilloma burden per group. The two lower dose groups showed no significant difference from the control group with the posterior mean  of the average tumor burden $<$0$\cdot$001 and the 95\% credible intervals concentrated near zero. 
In the higher groups the average tumor burden grows with the dose level. Mean tumor burden and 95\% credible intervals are  
0$\cdot$18 [0$\cdot$06,0$\cdot$39], 9$\cdot$51 [9$\cdot$21,9$\cdot$80] and 12$\cdot$33 [11$\cdot$90,12$\cdot$72] 
for the 3, 6 and 12 mg/kg dose group respectively. Cumulative tumor burdens along with the dose group-specific empirical means for each week are reported in 
Figure~\ref{fig:cumulativetumor}.
\begin{figure}
\centering
\includegraphics[scale=.8]{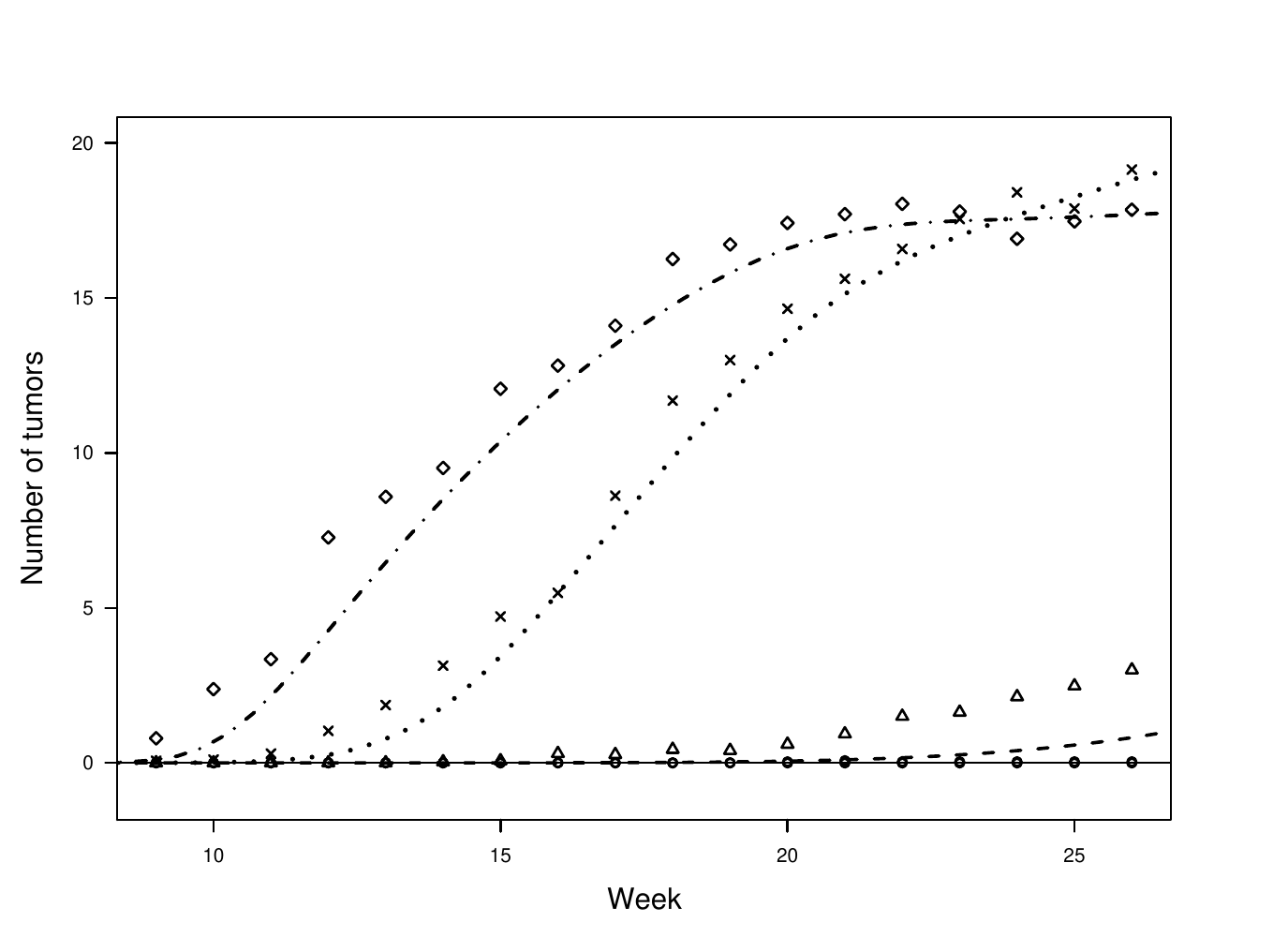}
\caption{Estimated cumulative mean tumor burden (lines) and weekly sample means (points) for the control group, 0$\cdot$75 mg/kg and 1$\cdot$5 mg/kg (solid line and circles), 3 mg/kg (dashed line and triangles), 6 mg/kg (dotted line and crosses) and 12 mg/kg (dash-dotted line and squares) dose groups.}
\label{fig:cumulativetumor}
\end{figure}

As a measure of time varying increase in papilloma burden, we computed the mean burden per dose group per week subtracting the average number for the control group. Posterior means and 95\% credible bands are reported in Figure~\ref{fig:tveff}. 
The two lower dose groups are indistinguishable from control, with panel (a) of Figure~\ref{fig:tveff} being a straght line equal to zero, while the 3, 6 and 12 mg/kg dose groups exhibit clear increases relative to control starting from the 17th, 9th and 8th week, respectively.
\begin{figure}
\centering
\includegraphics[scale=.9]{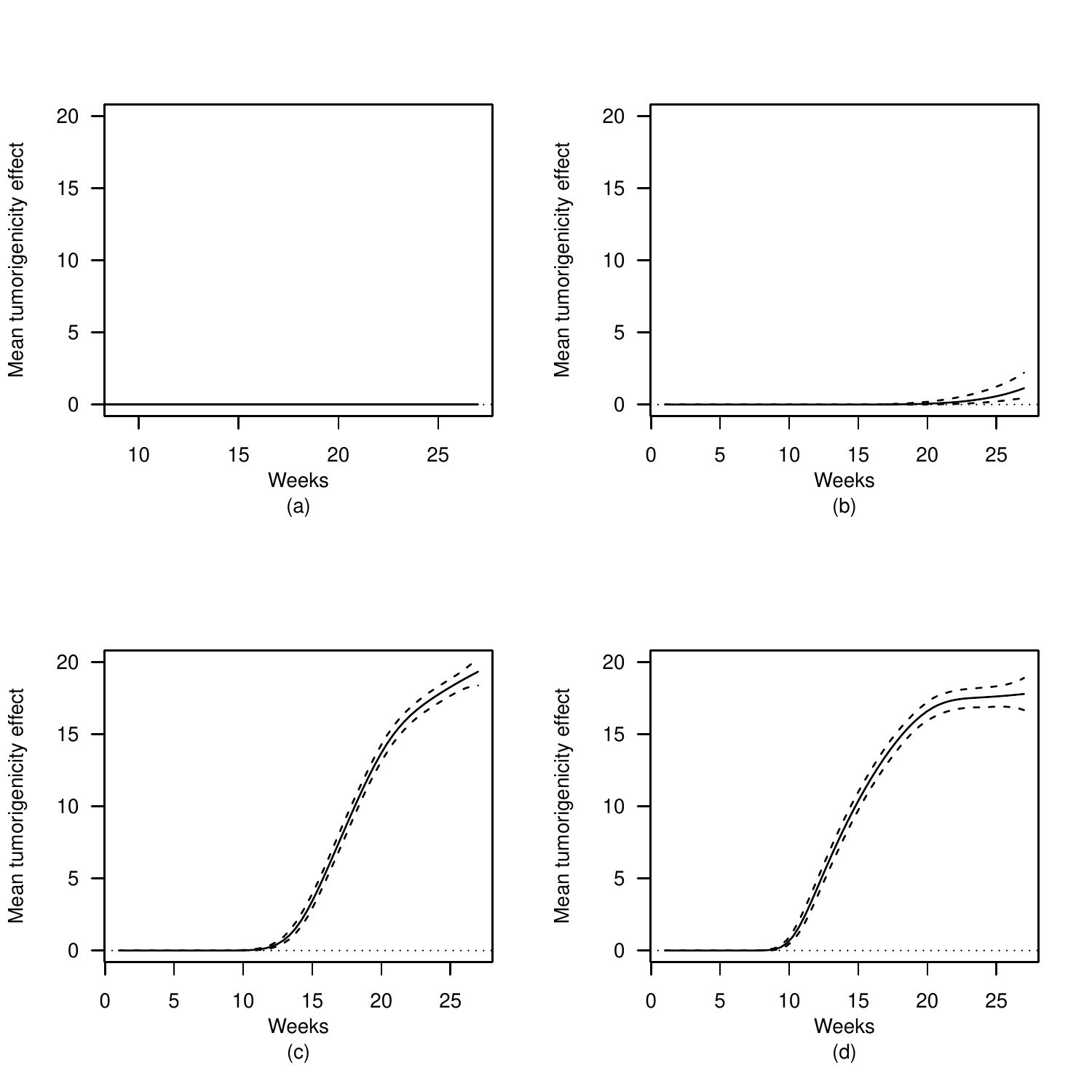}
\caption{Time varying chemical exposure posterior mean effect on tumorigenicity (continuous line) with 95\% credible bands (dashed lines) for (a) 0$\cdot$75 mg/kg and 1$\cdot$5 mg/kg, (b) 3 mg/kg, (c) 6 mg/kg and (d) 12 mg/kg dose groups. Dotted line at zero corresponds to no effect of the chemical.}
\label{fig:tveff}
\end{figure}
Higher dosages lead to higher numbers of skin papillomas, and earlier onset of the first tumor. Our modeling approach allows us to estimate the average
time of onset of first tumor, which occurs on the 27th, 14th and 11th week, for the three higher dose groups.  In other groups, the typical mouse did not develop tumors prior to the end of the study. 

Our overall conclusions agree with \citet{duns:herr:2005}, though the estimates differed somewhat. The group comparison results were also consistent with results from a frequentist generalized linear model analysis.  We additionally implemented standard frequentist nonparametric tests for comparing groups based on summaries of the tumor trajectory data including time of first tumor and maximum tumor burden per animal.   A p-value less than 0$\cdot$001 for the Kruskal-Wallis rank sum test suggested strong evidence against equality among the dose groups in the maximum tumor burden per animal.  Pairwise Wilcoxon tests were performed to test the equality of the maximum burden between each treated group and control, with one-sided alternatives of higher maximum burdens in the treated groups. P-values were less than 0$\cdot$01 for the three higher groups while being 1 and 0$\cdot$09 for the 0$\cdot$75 and 1$\cdot$5 mg/kg groups respectively. 
Similar results are obtained considering the time of development of the first tumor as a summary of the tumor trajectory.
As partly illustrated in Figure~\ref{fig:cumulativetumor}, which shows the empirical and estimated mean tumor burdens in each group, the model has a good fit to the data.

\subsection{Asthma inhaler use application}

We additionally analyzed data on daily usage of albuterol asthma inhalers \citep{grun:bruc:2011}. Daily counts of inhaler use were recorded for a period between 36 and 122 days for 48 students previously diagnosed with asthma. The total number of observations was 5,209.  As discussed by  \citet{grun:bruc:2011}, the data are under-dispersed.  Let $y_{it}$ denote the number of times the $i$th student used the inhaler on day $t$.  Interest focuses on the impact of morning levels of  {\small PM}$_{25}$, small particles less than 25 mm in diameter in air pollution, on asthma inhaler use.  At each day $t$, a vector $x_t = (x_{t1},\ldots,x_{tp})^T$ of environmental variables are recorded including {\small PM}$_{25}$, average daily temperature (Fahrenheit degree/100), \% humidity and barometric pressure (mmHg/1000).  We modify (\ref{eq:func1}) to include these predictors in an additive model as follows.
\begin{eqnarray}
y_{it} = h( y_{it}^* ), \quad y_{it}^* = \xi_i  + \sum_{j=1}^4 b_j(x_{jt})^T \theta_j + \epsilon_{it}, \label{eq:asthma}
\end{eqnarray} 
where $\xi_i$ is a random effect modeled as in \S 4$\cdot$2, $b_j$ is a  B-spline basis with $\theta_j$ the basis coefficients and $\epsilon_i \sim N(0, \tau^{-1} R)$, with $R$ the correlation matrix arising from a first order autoregressive process with correlation parameter $\rho$. The prior for each $\theta_j$ is identical to the prior used for $\theta_g$ in \S 4$\cdot$2 and each predictor is normalized to have mean zero and unit variance prior to analysis.  The correlation parameter is given a uniform prior on $[-1,1]$.  Computational details are reported in a supplemental appendix.

We ran our Markov chain Monte Carlo algorithm for 10,000 iterations with a 1,000 iteration burn-in discarded.  Convergence and mixing were diagnosed by monitoring the non-linear effects of the different predictors at several values and also monitoring hyperparameters; 
The trace plots showed excellent mixing, with effective sample size over 9,000. Autocorrelation functions tend to drop near zero between lag 1 and 2.
To obtain interpretable summaries of the non-linear covariate effects on the inhaler use counts, we recorded for each predictor at a dense grid of $x_{jt}$ values at each sample after burn-in the conditional expectation of the count for a typical student  having $\xi_i = \mu_Q$, where $\mu_Q$ is the mean of the random effects distribution $Q$,
\begin{eqnarray}
\mu_j(x_{jt}) & = & \mbox{E}( y_{it}\, |\, x_{jt}, x_{j't}=0, j' \neq j, \xi_i=\mu_Q, \theta, \tau, \rho) \notag \\
	      & \approx & \sum_{k=0}^{\lceil K \rceil} k [\Phi\{a_{k+1}; \mu_j^*(x_{jt}) , \tau\} -  \Phi\{a_k; \mu_j^*(x_{jt}),\tau\}],   
\label{eq:meaneff}
\end{eqnarray}
where $\Phi(\cdot;\mu,\tau)$ is the cumulative distribution function of a normal random variable with mean $\mu$ and precision $\tau$, $K$ is the 99$\cdot$99\% quantile of $N\{ \mu_j^*(x_{jt}) , \tau^{-1}\}$, and 
\begin{equation}
	\mu_j^*(x_{jt}) = b_j(x_{jt})^\T \theta_j + \sum_{l\neq j} b_l(0)^\T \theta_l + \mu_Q,
\end{equation}
with the other predictors fixed at their mean value. Based on these samples, we calculated posterior means and pointwise 95\% credible intervals, with the results reported in Figure~\ref{fig:asthma1}.  

These data were previously analyzed by \citet{grun:bruc:2011} using a Faddy distribution with a log-linear mixed model for the mean, 
\begin{equation}
\log \mbox{E}(y_{it}\, |\, x_t, \beta, u_i, e_{it}) =  \sum_{j=1}^p x_{jt} \beta_j + u_i + e_{it},
\end{equation} 
where $u_i$ is a subject-specific random effect and $e_{it}$ is a residual following a first-order autoregressive process.  They estimated a coefficient of 0$\cdot$013 for {\small PM}$_{25}$, which is close to zero with 95\% confidence interval including zero.  
A Poisson log-linear model analysis yielded a similar coefficient of 0$\cdot$014 but with a 50\% wider confidence interval.
Our approach, which is based on a substantially more flexible model that allows nonlinear effects and a nonparametric random effects distribution, produces results that are consistent with these earlier analyses.

\begin{figure}
\centering
\includegraphics[scale=.8]{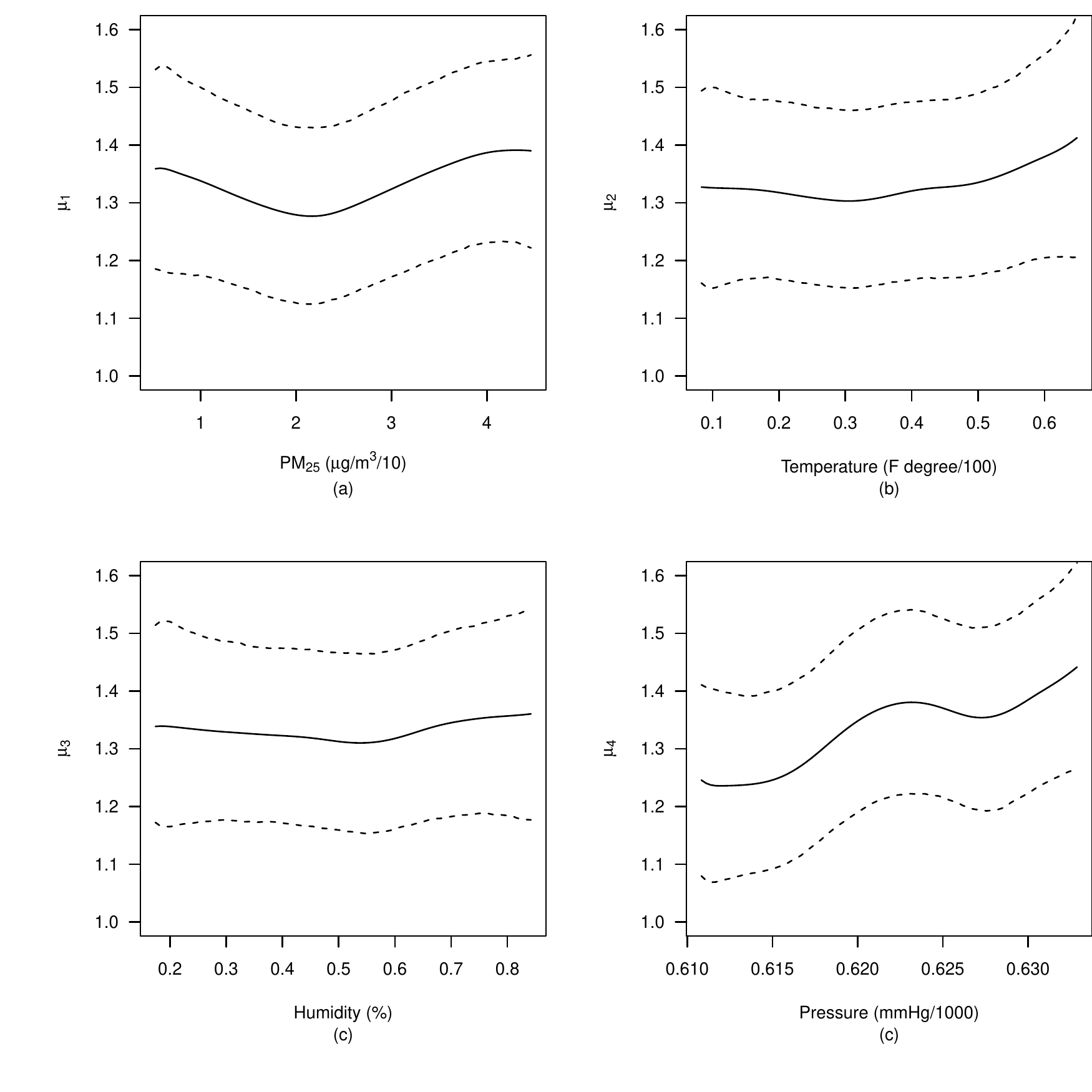}
\caption{Posterior mean and 95\% pointwise credible bands for the effect of (a) concentration of {PM$_{25}$} pollutant, (b) average daily temperature, (c) \% of humidity, and (d) barometric pressure on asthma inhaler use calculated with equation (\ref{eq:meaneff}).}
\label{fig:asthma1}
\end{figure}

\section{Discussion}

We have proposed a simple new approach for modeling count stochastic processes based on rounding continuous stochastic processes.  The general strategy is flexible and allows one to leverage existing algorithms and code for posterior computation for continuous stochastic processes.  Although rounding of continuous underlying processes is quite common for binary and categorical data, such approaches have not to our knowledge been applied to induce new families of count stochastic processes.  Instead, the vast majority of the literature for count processes relies on Poisson process and hierarchical Poisson constructions, which have some well known limitations in terms of flexibility.  We have explored some basic properties of rounding, but the primary contribution of this article is introducing the idea that rounding is useful in this context and we hope to stimulate additional work on properties of the resulting processes.  It is likely that some properties of the underlying continuous process, which are well known for Gaussian processes and in other standard cases, may carry over to the induced count process.   However, this deserves further study. There are also interesting directions in terms of modeling counting processes corresponding to non-decreasing count processes via rounding non-decreasing continuous processes using monotone splines \citep{rams:1998,neel:duns:2004,shiv:etal:2009} and other constructions.

\section*{Acknowledgements}
This research was partially supported by grants from the National Institute of Environmental
Health Sciences of the United States National Institutes of Health and the University of Padua in
Italy.

\appendix

\section*{Appendix 1}

\begin{proof}[of Lemma 1]

For any count stochastic process $y_0$ satisfying Assumption 1, we can partition the domain $\mathcal{S}$ into mutually disjoint sets $\mathcal{S}_l(y_0)$, with $y_0(s)$ constant within the interior of each $\mathcal{S}_l(y_0)$ and with unit increments at the boundaries.  There are clearly infinitely many continuous functions $y^*: \mathcal{S} \to \Re$ satisfying the constraints
(i) $y^*(s) \in [a_{y_0(s)},a_{y_0(s)+1})$ for all $s \in \mathcal{S}$ and
(ii) $y^*(s) = a_{y_0(s)}$ for $s \in \mathcal{B}(y_0)$.
For all such $y^*$, we have $y_0 = h(y^*)$.
\end{proof}

\begin{proof}[of Theorem \ref{theo:l1mapping}]
Theorem is an immediate consequence of Lemma \ref{lem:existence} and of the following Lemma ensuring that the mapping $h$ maintains $L^1$ neighborhoods.

\begin{lemma}
\label{lem:neigh}
Suppose $y^*$ and $y_0^*$ are continuous and bounded by $M \in \Re$ such that $d_1(y^*, y_0^*) = \epsilon^*$, $y=h(y^*)$ , $y_0 = h(y_0^*)$. Then, $y \in \eta_{\epsilon}(y_0)$ for all $\epsilon > \zeta(\epsilon^*;y^*_0)$, where $\zeta(\epsilon^*;y_0^*)$ is non decreasing in $\epsilon^*$ having $\lim_{\epsilon^* \to 0} \zeta(\epsilon^*;y^*_0) = 0$.
\end{lemma}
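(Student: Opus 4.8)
The plan is to reduce the $L^1$ distance between the rounded functions to a statement about how much Lebesgue measure $y_0^*$ places near the rounding boundaries $\{a_j : j \geq 1\} = \{0,1,2,\dots\}$. First I would write the rounding map as a sum of indicators, $h(x) = \sum_{j \geq 1} \mathbf{1}\{x \geq a_j\}$, so that for each $s$ the integer $|h(y^*(s)) - h(y_0^*(s))|$ equals the number of boundaries $a_j$ lying between $y^*(s)$ and $y_0^*(s)$. Since an interval of length $|y^*(s)-y_0^*(s)|$ contains at most $|y^*(s)-y_0^*(s)| + 1$ such integers, and this count vanishes wherever the two rounded values agree, setting $G = \{s : h(y^*(s)) \neq h(y_0^*(s))\}$ yields
\[
d_1(y,y_0) = \int_G |h(y^*)-h(y_0^*)| \,\d s \le \int_G |y^* - y_0^*|\,\d s + \mu(G) \le \epsilon^* + \mu(G),
\]
where $\mu$ is Lebesgue measure on $\mathcal{S}$. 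Boundedness by $M$ guarantees finiteness, since only finitely many boundaries are relevant and $\mu(\mathcal{S}) < \infty$.

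Next I would control $\mu(G)$ by introducing a threshold $\delta > 0$ and splitting $G$. On $G \cap \{|y^*-y_0^*| > \delta\}$ a Markov-type estimate gives measure at most $\epsilon^*/\delta$, because $\delta\,\mu(\cdot) \le \int |y^*-y_0^*| \le \epsilon^*$. On $G \cap \{|y^*-y_0^*| \le \delta\}$ the two functions straddle a boundary while differing by at most $\delta$, which forces $y_0^*(s)$ to lie within $\delta$ of some boundary; hence this part is contained in $N_\delta := \{s : \mathrm{dist}(y_0^*(s),\{a_j : j \ge 1\}) \le \delta\}$. Combining,
\[
d_1(y,y_0) \le \epsilon^*\Bigl(1 + \tfrac{1}{\delta}\Bigr) + \mu(N_\delta),
\]
and I would define $\zeta(\epsilon^*; y_0^*) = \inf_{\delta > 0}\{\epsilon^*(1+1/\delta) + \mu(N_\delta)\}$. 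Monotonicity in $\epsilon^*$ is then immediate, since each bracketed term is non-decreasing in $\epsilon^*$ for fixed $\delta$ and an infimum of non-decreasing functions is non-decreasing.

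The remaining and genuinely delicate point is $\lim_{\epsilon^* \to 0}\zeta(\epsilon^*;y_0^*)=0$. Taking, say, $\delta = \sqrt{\epsilon^*}$ makes $\epsilon^*(1+1/\delta)\to 0$, so it suffices that $\mu(N_\delta)\to 0$ as $\delta \downarrow 0$. Since $N_\delta$ decreases to $N_0 = \{s : y_0^*(s) \in \{0,1,2,\dots\}\}$ and $\mu(\mathcal{S}) < \infty$, continuity of measure from above gives $\mu(N_\delta)\to\mu(N_0)$. This is where the only real obstacle lies: the limit is $0$ precisely when $y_0^*$ attains integer boundary values on a Lebesgue-null set. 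For an arbitrary bounded continuous $y_0^*$ this can fail (e.g.\ $y_0^*$ constant at an integer on a positive-measure block, where downward perturbations of vanishing $L^1$ size flip the rounded value on the whole block), so I would stress that the lemma is invoked on the representative supplied by Lemma~\ref{lem:existence}, which may be chosen to equal a boundary value $a_{y_0(s)}$ only on $\mathcal{B}(y_0)$, a null set because $y_0$ is piecewise constant. For that choice $\mu(N_0)=0$, so $\zeta\to 0$ and the argument closes; I would make this null-level-set condition explicit rather than leave it implicit.
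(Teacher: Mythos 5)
Your proof is correct, and it takes a genuinely different---and in fact more careful---route than the paper's. The paper defines $\zeta(\epsilon^*;y_0^*)$ as a supremum over $y^*\in\eta_{\epsilon^*}(y_0^*)$ of a partition-based quantity (the largest integer discrepancy $\max_l \delta_l$ between $y_0$ and $h(y^*)$, times the measure of the cells where they disagree), and then argues this vanishes as $\epsilon^*\to 0$ in two steps: small $L^1$ distance between bounded continuous functions forces small sup distance, and small sup distance forces the rounding cells to coincide. Your argument instead uses the counting bound $|h(u)-h(v)|\le |u-v|+1$ on the disagreement set $G$, Markov's inequality on $\{|y^*-y_0^*|>\delta\}$, and the inclusion of the rest of $G$ in the boundary neighborhood $N_\delta$, yielding the explicit $\zeta(\epsilon^*;y_0^*)=\inf_{\delta>0}\{\epsilon^*(1+1/\delta)+\mu(N_\delta)\}$; all three estimates are valid, and monotonicity and the limit computation are handled cleanly via continuity of measure from above.

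What your route buys is substantial, because both steps of the paper's limiting argument are false as stated: a thin spike has small $L^1$ norm but unit sup norm, and the paper's inequality $|a_{j_l+1}-a_{k_l+1}|\le M_{\epsilon^*}$ fails near rounding boundaries (take $y^*(s_M)=0.99$, $y_0^*(s_M)=1.01$, so the cell endpoints differ by $1$ while $M_{\epsilon^*}=0.02$). These failures occur exactly at the obstruction you isolate: if $y_0^*$ sits at a boundary value on a set of positive measure (your $y_0^*\equiv 0$, $y^*\equiv-\epsilon$ example), then $d_1(y^*,y_0^*)$ is arbitrarily small while $d_1(h(y^*),h(y_0^*))=\mu(\mathcal{S})$, so no $\zeta$ with $\lim_{\epsilon^*\to 0}\zeta=0$ can exist and the lemma as literally stated is false. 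Your added hypothesis $\mu(\{s: y_0^*(s)\in\{a_j: j\ge 1\}\})=0$ is precisely the missing condition, and your observation that it can be arranged for the representative supplied by Lemma~\ref{lem:existence} (which touches boundary values only on the null set $\mathcal{B}(y_0)$) is exactly what is needed for the lemma's role in Theorem~\ref{theo:l1mapping}, where one is free to choose the preimage $y_0^*$. So your proof does not merely differ from the paper's; it repairs a real gap in it while preserving the conclusion that the support theorem actually uses, and making the null-level-set condition explicit in the lemma's statement is the right way to present it.
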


\begin{proof}[of Lemma \ref{lem:neigh}]
Take $\mathcal S = [0,1]^p$ without loss of generality. Let $\{ \mathcal{S}_l(y_0,y) \}_{l =1}^m$ the partition of $\mathcal S$ induced by $\{ \mathcal{S}_l(y_0)\}_{l =1}^{m_0}$ and $\{ \mathcal{S}_l(y)\}_{l =1}^{m_1}$ such that $y(s) = j_l$ and $y_0(s) =k_l$ for all $s \in \mathcal{S}_l(y_0,y)$ and some $j_l,k_l \in \mathcal{N}$. 
Let $\delta_l(y_0,y) = |j_l - k_l|$, for $l= 1, \dots, m$ and $\lambda(\cdot)$ be the Lebesgue measure. Define
\[
	\zeta(\epsilon^*;y^*_0) = \sup_{y^* \in \eta_{\epsilon^*} (y_0^*) } \left\{ \max_{l=1,2,\dots} \left[ \delta_l\{ y_0,h(y^*) \} \right]  \sum_{l: \delta_l \neq 0} \lambda[ \mathcal{S}_l\{y_0,h(y^*) \} ]  \right\}.
\]
Clearly $y \in \eta_{\epsilon}(y_0)$ for all $\epsilon > \zeta(\epsilon^*;y_0^*)$  since 
\[
	d_1(y_0,y) = \sum_{l=1}^m \delta_l(y_0,y)	 \lambda\{\mathcal{S}_l(y_0,y) \} \leq \zeta(\epsilon^*;y_0^*).
\]
We show first that $\lim_{\epsilon^* \to 0} \zeta(\epsilon^*;y_0) = 0$. What follows holds for all $y^* \in \eta_{\epsilon^*} (y_0^*)$. Consider the general $y^* \in \eta_{\epsilon^*} (y_0^*)$. Since $\sum_{l: \delta_l \neq 0} \lambda[ \mathcal{S}_l\{y_0,h(y^*) \} ] $ is finite, $\zeta(\epsilon^*;y_0)$ goes to zero if $\max \delta_l\{y_0,h(y^*)\}$ goes to zero.
Define $M_{\epsilon^*} = \max  \left| y^*(s) - y_0^*(s) \right|$ and let $s_M = \arg\max  \left| y^*(s) - y_0^*(s) \right|$ with $s_M$ belonging to a given $\mathcal{S}_l$ where $y^*(s) \le a_{j_l+1}$ and $y_0^*(s) \le a_{k_l+1}$. 
For construction $ |a_{l_l+1} - a_{k_l+1}| \leq M_{\epsilon^*}$ and so for $M_{\epsilon^*} \to 0 $ we have $a_{l_l+1} = a_{k_l+1}$. 
Considering that $\max  \left| y^*(s) - y_0^*(s) \right| \to 0$ then $\left| y^*(s) - y_0^*(s) \right| \to 0$ for all $s \in \mathcal{S}$ leading also to  $\max \delta_l \to 0$.
Whereas the absolute value of the difference $ \left| y^*(s) - y_0^*(s) \right|$ is bounded and continuous we have that if $\int_{\mathcal{S}} \left| y^*(s) - y_0^*(s) \right| \d s $ goes to zero, also $\lim \sup_{\mathcal{S}} |y_0^*(s)-y^*(s)|$ goes to zero and hence also $M_{\epsilon^*}$.

The fact that $\zeta(\cdot;y_0)$ is non decreasing follows directly from its definition.
\end{proof}

By Lemma \ref{lem:neigh} with suitable $\epsilon^*$ we have
\[
\Pi\{ \eta_{\epsilon}(y_0) \} = \Pi[h\{ \eta_{\epsilon^*}(y^*_0) \}] =  \Pi^*\{ \eta_{\epsilon^*}(y^*_0) \} > 0.
\]
\end{proof}


\begin{proof}[of Theorem \ref{theo:consistency}]
Since $y_0(s_i)$ is equal to the observed $y_i$ for all $i$, 
we can rewrite the posterior (\ref{eq:posterior}) as 
\begin{eqnarray*}
	&& \Pi \left\{ y \in \eta_\epsilon^C(y_0) \mid y_1, \dots, y_n	\right\}  =  \\
		 &&= \frac{
			\int_{\eta_\epsilon^C(y_0) \cap \mathcal{C}_n} \prod_{i=1}^n   \delta_{y_i}(y_i)  \d \Pi(y)+
			\int_{\eta_\epsilon^C(y_0) \cap \mathcal{C}_n^C} \prod_{i=1}^n \delta_{y_i}(y_i)  \d \Pi(y) 
                  }{
			\int_{\mathcal{C}} \prod_{i=1}^n \delta_{y_i}(y_i)  \d \Pi(y)
			} \\
		&&\leq \Phi_n +  
		\frac{
			(1-\Phi_n) \int_{\eta_\epsilon^C(y_0) \cap \mathcal{C}_n} \prod_{i=1}^n   \delta_{y_i}(y_i)  \d \Pi(y)+
			\int_{\eta_\epsilon^C (y_0) \cap \mathcal{C}_n^C} \prod_{i=1}^n \delta_{y_i}(y_i)  \d \Pi(y) 
                  }{
			\int_{\mathcal{C}} \prod_{i=1}^n \delta_{y_i}(y_i)  \d \Pi(y)
			} \\
		&& =  \Phi_n +  \frac{ I_{1,n}(y_1,\dots,y_n) + I_{2,n}(y_1,\dots,y_n)}{I_{3,n}(y_1,\dots,y_n)},
\end{eqnarray*}
where $\delta_a$ is a delta mass at $a$, $\Phi_n$ is a test function and $\mathcal{C}_n$ is a sieve that grows eventually to the whole space $\mathcal C$. It suffices to show that
\begin{eqnarray}
	\Phi_n &\to& 0  \label{Itype} \\
	e^{\beta_1 n}I_{1,n}(y_1, \dots, y_n) &\to& 0  \label{IItype} \\
	e^{\beta_2 n}I_{2,n}(y_1, \dots, y_n) &\to& 0  \label{outsieve} \\
	e^{\beta n}  I_{3,n}(y_1, \dots, y_n) &\to& \infty  \label{domain}
\end{eqnarray}
with $\beta < \min\{\beta_1,\beta_2\}$.

Denote $\lfloor a \rfloor$ the integer part of $a$ and let 
$
\mathcal{S} = \bigcup_{j=1}^{\lfloor n^{1/p}\rfloor^p} \mathcal{G}_j
$
with $\mathcal{G}_j$ an $L^{\infty}$ ball of size $0.5 ( \lfloor  n^{1/p}\rfloor)^{-1} $ and center $s_j'$, where the centers are chosen on a grid so that $\lfloor n^{1/p}\rfloor^p$ balls cover $\mathcal{S}$ and each $\mathcal{G}_j$ contains at least one element of $(s_1,\ldots, s_n)^T$ under Assumption 2.
Define $X_i = 1\{y(s_i) = y_0(s_j')\}$ with $s_j'$ being the centroid of the $\mathcal{G}_j$ in which $s_i$ is contained. Let \mbox{$\Phi_n = 1\{\sum_{i=1}^n X_i < n\}$} the test on the set 
\begin{equation}
	\mathcal{C}_n = \left\{ y: y \text{ is constant in } \mathcal{G}_j, \text{for all $j= 1, \dots, \lfloor n^{1/p}\rfloor^p$}, ||y||_\infty < M_n  \right\} 
\label{eq:sieve}
\end{equation}
with $M_n = \mathcal{O}(n^\alpha)$ and $1/2 < \alpha < 1$. The first condition on the sieve governs the regularity of the process while the second gives an upper bound for the infinity norm as in \citet{choi:sche:2007}.
The true $y_0$ belongs to $\mathcal{C}_n$ for a given $n$ and hence for $n$ sufficiently large the test functions have exactly zero type~I and type~II probability.
From this (\ref{Itype}) is directly verified. 
We continue to prove (\ref{IItype}). By Fubini's theorem we have 
\begin{eqnarray*}
	E_{y_0} \{ I_{1,n}(y_1, \dots, y_n) \} 
		&=& E_{y_0} \left\{ (1-\Phi_n) \int_{\eta_\epsilon^C (y_0)\cap \mathcal{C}_n^C} \prod_{i=1}^n   \delta_{y_i}(y_i)  \d \Pi(y)  \right\} \\
		&=& \int_{\eta_\epsilon^C(y_0) \cap \mathcal{C}_n^C}  E_{y} \{ (1-\Phi_n) \} = 0
\end{eqnarray*}
where the final equality is directly verified by the test construction. 
Next we prove (\ref{outsieve}). Again by Fubini's theorem we have
\begin{eqnarray*}
	E_{y_0} \{ I_{2,n}(y_1, \dots, y_n) \} 
		&=& E_{y_0} \left\{ \int_{\eta_\epsilon^C (y_0)\cap \mathcal{C}_n^C} \prod_{i=1}^n   \delta_{y_i}(y_i)  \d \Pi(y)  \right\} \\
		&\leq& \Pi(\mathcal{C}_n^C) \\
		&\leq& c_1 e^{-c_2 n},
\end{eqnarray*}
and hence  for $\beta_2 < c_2$,
\[
	e^{\beta_2 n} I_{2,n}(y_1, \dots, y_n) \to 0 . 
\]
Finally the prior positivity of $\Pi$ makes $I_{3,n}(y_1, \dots, y_n)$ to be positive. This proves also (\ref{domain}) and concludes the proof.
\end{proof}

\bibliographystyle{biometrika} 
\bibliography{biblio}

\end{document}